\documentclass[12pt]{article}
\usepackage{bbding}
\usepackage{graphicx}
\usepackage{multirow}


\usepackage[T1]{fontenc}
\usepackage[sc]{mathpazo}
\usepackage{amsmath}
\usepackage{amssymb}
\usepackage{enumerate}
\usepackage{amsthm}
\usepackage{amsfonts,mathrsfs}

\setlength{\topmargin}{0in} \setlength{\headheight}{0in}
\setlength{\headsep}{0.0in} \setlength{\textheight}{8.85in}
\setlength{\oddsidemargin}{0in} \setlength{\evensidemargin}{0in}
\setlength{\textwidth}{6.5in}


\usepackage{hyperref}
\hypersetup{pdfpagemode=UseNone}


\newcommand{\tinyspace}{\mspace{1mu}}

\newcommand{\norm}[1]{\left\lVert\tinyspace #1 \tinyspace\right\rVert}

\newcommand{\setft}[1]{\mathrm{#1}}

\newcommand{\density}[1]{\setft{D}\left(#1\right)}

\newcommand{\supp}{{\operatorname{supp}}}

\newenvironment{mylist}[1]{\begin{list}{}{
    \setlength{\leftmargin}{#1}
    \setlength{\rightmargin}{0mm}
    \setlength{\labelsep}{2mm}
    \setlength{\labelwidth}{8mm}
    \setlength{\itemsep}{0mm}}}
    {\end{list}}


\def\ot{\otimes}


\newcommand{\pa}[1]{(#1)}
\newcommand{\Pa}[1]{\left(#1\right)}

\newcommand{\Br}[1]{\left[#1\right]}

\newcommand{\Set}[1]{\left\{#1\right\}}







\DeclareMathOperator{\trace}{Tr}
\newcommand{\ptr}[2]{\trace_{#1}\pa{#2}}
\newcommand{\Ptr}[2]{\trace_{#1}\Pa{#2}}

\newcommand{\Tr}[1]{\Ptr{}{#1}}












\def\cH{\mathcal{H}}



\def\rS{\mathrm{S}}




\newtheorem{thrm}{Theorem}[section]

\newtheorem{prop}[thrm]{Proposition}
\newtheorem{cor}[thrm]{Corollary}
\theoremstyle{definition}

\numberwithin{equation}{section}


\newcounter{questionnumber}

\begin{document}

\title{On some entropy inequalities}

\author{Lin Zhang\footnote{E-mail: godyalin@163.com;
linyz@zju.edu.cn}\\
  {\small\it Institute of Mathematics, Hangzhou Dianzi University, Hangzhou 310018, PR~China}}

\date{}
\maketitle
\maketitle \mbox{}\hrule\mbox\\
\begin{abstract}

In this short report, we give some new entropy inequalities based on
R\'{e}nyi relative entropy and the observation made by Berta {\em et
al} [\href{http://arxiv.org/abs/1403.6102}{arXiv:1403.6102}]. These
inequalities obtained extends some well-known entropy inequalities.
We also obtain a condition under which a tripartite operator becomes
a Markov state.

\end{abstract}
\maketitle \mbox{}\hrule\mbox\\

\section{Introduction}

Recently, Carlen and Lieb \cite{Carlen} gives improvement of some
entropy inequalities by using \emph{Perels-Bogoliubov inequality}
and \emph{Golden Thompson inequality}. It is this paper that sparked
the present author to extend their work \cite{Lin} and get a
unifying treatment of some entropy inequalities via R\'{e}nyi
relative entropy \cite{Zhang}. Note that, by the monotonicity of
R\'{e}nyi relative entropy, the following inequality is derived:
\begin{eqnarray}
\rS(\rho||\sigma)\geqslant -2\log\Tr{\sqrt{\rho}\sqrt{\sigma}}
\end{eqnarray}
for two states $\rho,\sigma$. In fact, this inequality can be
extended as follows:
\begin{prop}
For a state $\rho\in\density{\cH}$ and a substate $\sigma$ on $\cH$
(i.e. $\Tr{\sigma}\leqslant1$), it holds that
\begin{eqnarray}\label{eq:substate}
\rS(\rho||\sigma)&\geqslant& -2\log\Tr{\sqrt{\rho}\sqrt{\sigma}}\\
&\geqslant&\norm{\sqrt{\rho}- \sqrt{\sigma}}^2_2\\
&\geqslant& \frac14\norm{\rho-\sigma}^2_1.
\end{eqnarray}
In particular $\rS(\rho||\sigma)=0$ if and only if $\rho=\sigma$.
\end{prop}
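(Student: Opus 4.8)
The plan is to verify the three displayed inequalities from top to bottom and then read the equality condition off the last one. Throughout set $t \defeq \Tr{\sqrt{\rho}\sqrt{\sigma}}$; since $\sqrt{\rho},\sqrt{\sigma}\geq0$ one has $t\geq0$, and Cauchy--Schwarz for the Hilbert--Schmidt inner product gives $t \leq \sqrt{\Tr{\rho}\,\Tr{\sigma}} \leq 1$, using $\Tr{\rho}=1$ and $\Tr{\sigma}\leq 1$. For the first inequality I would reduce the substate case to the state case already recorded in the introduction (assuming $\sigma\neq0$; otherwise both sides are $+\infty$). Writing $s\defeq\Tr{\sigma}>0$ and $\widetilde{\sigma}\defeq\sigma/s$, a genuine state, the scalar factor passes through the logarithm and the square root to give $\rS(\rho||\sigma)=\rS(\rho||\widetilde{\sigma})-\log s$ and $-2\log t = -2\log\Tr{\sqrt{\rho}\sqrt{\widetilde{\sigma}}}-\log s$. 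The two $-\log s$ terms cancel, so the substate statement is exactly the state inequality $\rS(\rho||\widetilde{\sigma}) \geq -2\log\Tr{\sqrt{\rho}\sqrt{\widetilde{\sigma}}}$, i.e. the $\alpha=\tfrac12$ instance of the monotonicity of R\'enyi relative entropy in its order used in the introduction.

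For the second inequality I would expand the Hilbert--Schmidt norm directly,
\[
\norm{\sqrt{\rho}-\sqrt{\sigma}}_2^2 = \Tr{\rho}+\Tr{\sigma}-2t = 1+\Tr{\sigma}-2t \leq 2(1-t),
\]
using $\Tr{\sigma}\leq 1$ at the last step, and then invoke the elementary scalar bound $-\log t \geq 1-t$ (valid for $t>0$) to conclude $-2\log t \geq 2(1-t) \geq \norm{\sqrt{\rho}-\sqrt{\sigma}}_2^2$. When $t=0$ the left-hand side is $+\infty$ and there is nothing to prove.

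The third inequality is where the real work lies. Writing $A\defeq\sqrt{\rho}$ and $B\defeq\sqrt{\sigma}$, I would start from the symmetrized factorization
\[
\rho-\sigma = A^2-B^2 = \tfrac12\Pa{(A+B)(A-B)+(A-B)(A+B)},
\]
apply the triangle inequality for the trace norm, and bound each term by H\"older's inequality $\norm{XY}_1 \leq \norm{X}_2\norm{Y}_2$, obtaining $\norm{\rho-\sigma}_1 \leq \norm{A+B}_2\,\norm{A-B}_2$. The remaining ingredient is $\norm{A+B}_2^2 = \Tr{\rho}+\Tr{\sigma}+2t \leq 4$, which follows once more from $\Tr{\rho}=1$, $\Tr{\sigma}\leq1$ and $t\leq1$; squaring and substituting yields $\norm{\rho-\sigma}_1^2 \leq 4\norm{A-B}_2^2$, which is the claim. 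The main obstacle here is choosing the factorization correctly: the trace norm is not submultiplicative in the naive sense, and it is precisely the symmetrized product together with H\"older that controls $\norm{\rho-\sigma}_1$ by the Hilbert--Schmidt norm of $A-B$.

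Finally, the equality assertion is immediate from the chain just established. Every quantity appearing is nonnegative, so $\rS(\rho||\sigma)=0$ forces $\tfrac14\norm{\rho-\sigma}_1^2=0$ and hence $\rho=\sigma$. The converse is trivial: if $\rho=\sigma$ then $\Tr{\sigma}=1$, so $\sigma$ is a state and $\rS(\rho||\rho)=0$.
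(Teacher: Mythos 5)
Your proof is correct and follows essentially the same route as the paper, whose entire argument is to quote the two-sided matrix inequality $\norm{\sqrt{M}-\sqrt{N}}_2^2\leqslant\norm{M-N}_1\leqslant\norm{\sqrt{M}-\sqrt{N}}_2\norm{\sqrt{M}+\sqrt{N}}_2$ and the trace bounds $\Tr{\rho}=1$, $\Tr{\sigma}\leqslant 1$. You are in fact more complete than the paper: you supply the symmetrized-factorization/H\"older proof of the upper half of that matrix inequality (the only half actually needed for the stated chain), the elementary bound $-\log t\geqslant 1-t$ that justifies the middle inequality, and the rescaling argument that reduces the substate case of the first inequality to the state case from the introduction --- all steps the paper leaves implicit.
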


\begin{proof}
The following matrix inequality is important:
\begin{eqnarray}
\norm{\sqrt{M} - \sqrt{N}}^2_2 \leqslant\norm{M-N}_1\leqslant \norm{\sqrt{M} - \sqrt{N}}_2 \norm{\sqrt{M} + \sqrt{N}}_2,
\end{eqnarray}
where $M,N$ are positive matrices. Since both the traces of $\rho$
and $\sigma$ are no more than one, the desired inequality is
derived.
\end{proof}

\begin{prop}[\cite{Zhang}]
For two states $\rho,\sigma\in\density{\cH}$ and a quantum channel
$\Phi$ over $\cH$, it holds that
\begin{eqnarray}
&&\rS(\rho||\sigma) - \rS(\Phi(\rho)||\Phi(\sigma))
\\
&&\geqslant-2\log\Tr{\sqrt{\rho}\sqrt{\exp\Br{\log\sigma +
\Phi^*(\log\Phi(\rho)) - \Phi^*(\log\Phi(\sigma))}}}.
\end{eqnarray}
\end{prop}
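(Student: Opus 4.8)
The plan is to reduce the asserted bound to the substate inequality \eqref{eq:substate} applied to a single auxiliary operator assembled from $\sigma$ and the channel. First I would set
\[
\tilde\sigma := \exp\Br{\log\sigma + \Phi^*(\log\Phi(\rho)) - \Phi^*(\log\Phi(\sigma))}
\]
and verify the algebraic identity
\[
\rS(\rho||\sigma) - \rS(\Phi(\rho)||\Phi(\sigma)) = \Tr{\rho\log\rho} - \Tr{\rho\log\tilde\sigma} = \rS(\rho||\tilde\sigma).
\]
This follows by writing each relative entropy as $\rS(\alpha||\beta)=\Tr{\alpha\log\alpha}-\Tr{\alpha\log\beta}$ and using the adjoint relation $\Tr{\Phi(X)Y}=\Tr{X\Phi^*(Y)}$ to transfer the channel from $\Phi(\rho)$ back onto $\rho$, so that the three $\sigma$-type contributions coalesce into the single term $\Tr{\rho\log\tilde\sigma}$ (note $\log\tilde\sigma$ equals the exponent by construction).

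The crux is to show that $\tilde\sigma$ is a substate, i.e.\ $\Tr{\tilde\sigma}\leqslant 1$; only then does \eqref{eq:substate} apply with $\sigma$ replaced by $\tilde\sigma$. For this I would invoke the Gibbs variational principle,
\[
\log\Tr{e^H} = \max_{\omega\in\density{\cH}}\Br{\Tr{\omega H} - \Tr{\omega\log\omega}},
\]
taken with $H=\log\tilde\sigma$. Transferring the channel onto each test state $\omega$ via the adjoint, the bracketed quantity for a given $\omega$ rearranges into
\[
-\rS(\omega||\sigma) + \rS(\Phi(\omega)||\Phi(\sigma)) - \rS(\Phi(\omega)||\Phi(\rho)).
\]
Here the first two terms combine to something $\leqslant 0$ by the monotonicity of relative entropy under $\Phi$, while the last term is $\leqslant 0$ by nonnegativity of relative entropy. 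Hence the maximum is $\leqslant 0$, which gives $\log\Tr{\tilde\sigma}\leqslant 0$ and therefore $\Tr{\tilde\sigma}\leqslant 1$.

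Finally, with $\tilde\sigma$ established as a substate, \eqref{eq:substate} yields $\rS(\rho||\tilde\sigma)\geqslant -2\log\Tr{\sqrt{\rho}\sqrt{\tilde\sigma}}$, which together with the identity above is exactly the claimed inequality.

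The main obstacle is the substate bound $\Tr{\tilde\sigma}\leqslant 1$: everything else is bookkeeping, but this is the step that makes \eqref{eq:substate} usable, and it is precisely where the monotonicity of relative entropy enters essentially. A secondary technical point is ensuring that all the logarithms and the operator exponential are well defined; I would handle this by first assuming that $\rho$, $\sigma$, and their images $\Phi(\rho)$, $\Phi(\sigma)$ have full support, and then recovering the general case by a standard continuity argument.
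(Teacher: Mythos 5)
Your argument is correct, and it is worth noting that the paper itself offers no proof of this proposition: it is imported verbatim from \cite{Zhang}, with only the general recipe announced later in the introduction (``rewrite the quantity as a relative entropy whose second argument is a substate, then apply Prop.~1.1''). Your proof is exactly an execution of that recipe. The algebraic identity $\rS(\rho||\sigma)-\rS(\Phi(\rho)||\Phi(\sigma))=\Tr{\rho\log\rho}-\Tr{\rho\log\tilde\sigma}$ checks out via the adjoint relation, and the one genuinely nontrivial ingredient --- that $\Tr{\tilde\sigma}\leqslant 1$ --- is handled cleanly by the Gibbs variational principle (equivalently Peierls--Bogoliubov): for any test state $\omega$ the bracketed quantity equals $-\rS(\omega||\sigma)+\rS(\Phi(\omega)||\Phi(\sigma))-\rS(\Phi(\omega)||\Phi(\rho))\leqslant 0$ by data processing and nonnegativity of relative entropy, so $\log\Tr{\tilde\sigma}\leqslant 0$. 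This is not circular, since the data-processing inequality has proofs independent of the refinement being established here. The cited source \cite{Zhang} reaches the same bound through the monotonicity in $\alpha$ of the R\'enyi relative entropy (the right-hand side being $D_{1/2}(\rho||\tilde\sigma)$), which sidesteps the substate verification at the level of the logarithmic bound; your route has the advantage that, once $\Tr{\tilde\sigma}\leqslant 1$ is in hand, the full chain of Prop.~1.1 (through the Frobenius- and trace-norm bounds) becomes available for $\tilde\sigma$ as well. Your closing caveat about supports is appropriate; under full-support assumptions every step is well defined, and the general case follows by continuity as you indicate.
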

Combining all of the above-mentioned inequality, we improved several
entropy inequalities , some of which simultaneously are obtained by
Carlen and Lieb \cite{Carlen}. In what follows, we list them here:
\begin{prop}[\cite{Carlen,Zhang}]
For two bipartite states $\rho_{AB},\sigma_{AB}\in
\density{\cH_{AB}}$ with $\cH_{AB}=\cH_A\ot\cH_B$, it holds that
\begin{eqnarray}
\rS(\rho_{AB}||\sigma_{AB}) - \rS(\rho_A||\sigma_A) &\geqslant& -2\log\Tr{\sqrt{\rho_{AB}}\sqrt{\exp(\log\sigma_{AB} - \log\sigma_A + \log\rho_A)}}\\
&\geqslant& \norm{\sqrt{\rho_{AB}} - \sqrt{\exp(\log\sigma_{AB} - \log\sigma_A + \log\rho_A)}}^2_2\\
&\geqslant& \frac14\norm{\rho_{AB} - \exp(\log\sigma_{AB} - \log\sigma_A + \log\rho_A)}^2_1.
\end{eqnarray}
\end{prop}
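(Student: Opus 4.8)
The plan is to obtain this proposition as the specialization of the preceding channel inequality (the R\'enyi data-processing refinement attributed to \cite{Zhang}) to the partial-trace channel, followed by an appeal to the substate proposition. First I would take $\Phi(\cdot)=\ptr{B}{\cdot}$, the partial trace over $\cH_B$, regarded as a channel from $\Lin{\cH_{AB}}$ to $\Lin{\cH_A}$. Its adjoint is the ampliation $\Phi^*(X)=X\ot\I_B$, so that $\Phi(\rho_{AB})=\rho_A$, $\Phi(\sigma_{AB})=\sigma_A$, and $\Phi^*(\log\rho_A)-\Phi^*(\log\sigma_A)=(\log\rho_A-\log\sigma_A)\ot\I_B$. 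Substituting into the preceding proposition yields the first inequality at once, the argument of the exponential being $\log\sigma_{AB}-\log\sigma_A+\log\rho_A$ with the $A$-operators tacitly tensored with $\I_B$.

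To reach the remaining two inequalities I would invoke the substate proposition with $\rho\to\rho_{AB}$ (a genuine state) and $\sigma\to\tau_{AB}:=\exp(\log\sigma_{AB}-\log\sigma_A+\log\rho_A)$, using precisely its chain $-2\log\Tr{\sqrt{\rho_{AB}}\sqrt{\tau_{AB}}}\geqslant\norm{\sqrt{\rho_{AB}}-\sqrt{\tau_{AB}}}_2^2\geqslant\tfrac14\norm{\rho_{AB}-\tau_{AB}}_1^2$. The one thing that must be checked before this is legitimate is that $\tau_{AB}$ is a substate, i.e. $\Tr{\tau_{AB}}\leqslant1$; this is the crux of the argument and the step I expect to be the main obstacle, since the factor $\tfrac14$ in the substate proposition relies on both traces being at most one.

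I would establish $\Tr{\tau_{AB}}\leqslant1$ via Lieb's concavity theorem: for a fixed Hermitian $H$ the map $\omega\mapsto\Tr{\exp(H+\log\omega)}$ is concave on positive operators. Taking $H=\log\sigma_{AB}-\log\sigma_A\ot\I_B$ and restricting to the affine family $\omega=\rho_A\ot\I_B$, the function $g(\rho_A)=\Tr{\tau_{AB}}$ is concave in $\rho_A$. At the reference point $\rho_A=\sigma_A$ the two $A$-terms cancel, so $g(\sigma_A)=\Tr{\exp(\log\sigma_{AB})}=\Tr{\sigma_{AB}}=1$. The delicate computation is the directional derivative there: writing $Y$ for the Fr\'echet derivative of $\log$ at $\sigma_A$ in direction $X=\rho_A-\sigma_A$, namely $Y=\int_0^\infty(\sigma_A+s)^{-1}X(\sigma_A+s)^{-1}\,ds$, one gets $\tfrac{d}{dt}g(\sigma_A+tX)\big|_{0}=\Tr{\sigma_{AB}(Y\ot\I_B)}=\Tr{\sigma_A Y}$. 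Using cyclicity together with the scalar identity $\int_0^\infty\lambda(\lambda+s)^{-2}\,ds=1$, which gives $\int_0^\infty\sigma_A(\sigma_A+s)^{-2}\,ds=\I$ on the support of $\sigma_A$, this collapses to $\Tr{X}=\Tr{\rho_A-\sigma_A}=0$. Hence the first-order term vanishes, and concavity supplies the tangent-line bound $g(\rho_A)\leqslant g(\sigma_A)=1$, so $\tau_{AB}$ is indeed a substate.

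With the substate property secured, the substate proposition applied to the pair $(\rho_{AB},\tau_{AB})$ delivers both the $2$-norm bound and the $\tfrac14$-factor trace-norm bound, completing the chain. The only subtlety I would flag along the way is the usual support condition $\supp\rho\subseteq\supp\sigma$ ensuring the relative entropies and logarithms are well defined, handled by restricting to supports or by a continuity argument; apart from the derivative computation in the substate verification, the reduction to the channel inequality and the final appeal to the substate proposition are routine.
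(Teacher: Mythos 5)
Your proposal is correct and follows exactly the route the paper intends: the paper states this proposition without proof, but its announced strategy is precisely to specialize the channel inequality (Prop.\ 1.2) to the partial trace $\ptr{B}{\cdot}$ for the first line and then to apply the substate chain of Prop.\ 1.1 to $\tau_{AB}=\exp(\log\sigma_{AB}-\log\sigma_A+\log\rho_A)$ for the remaining two. The one place you go beyond the paper is in actually verifying $\Tr{\tau_{AB}}\leqslant 1$ (via Lieb concavity plus the vanishing first-order term); the paper leaves this substate property implicit here, and your tangent-line argument, equivalent to invoking Lieb's triple-matrix inequality, is a correct way to supply it.
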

\begin{prop}[\cite{Lin,Zhang}]
\begin{eqnarray}\label{eq:lower-bound}
I(A:B|C)_\rho&\geqslant& -2\log\Tr{\sqrt{\rho_{ABC}}
\sqrt{\exp(\log\rho_{AC} - \log\rho_C + \log\rho_{BC})}}\\
&\geqslant& \norm{\sqrt{\rho_{ABC}} -
\sqrt{\exp(\log\rho_{AC} - \log\rho_C + \log\rho_{BC})}}^2_2\\
&\geqslant&\frac14 \norm{\rho_{ABC} -
\exp\Pa{\log\rho_{AC} + \log\rho_{BC} -\log\rho_C}}^2_1,
\end{eqnarray}
where $I(A:B|C)_\rho := \rS(\rho_{AC})+ \rS(\rho_{BC}) - \rS(\rho_{ABC}) -
\rS(\rho_C)$.
\end{prop}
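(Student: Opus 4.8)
The plan is to express $I(A:B|C)_\rho$ as a difference of two relative entropies connected by a partial trace channel, and then feed this difference into the Proposition of \cite{Zhang}, afterwards using the substate chain of the first Proposition. First I would take the channel $\Phi=\trace_A$ on $\density{\cH_{ABC}}$, whose adjoint is $\Phi^*\colon X_{BC}\mapsto\I_A\ot X_{BC}$, together with $\rho=\rho_{ABC}$ and the unnormalised reference $\sigma=\rho_{AC}\ot\I_B$. Using $\log(\rho_{AC}\ot\I_B)=\log\rho_{AC}\ot\I_B$ and the marginal relations $\ptr{B}{\rho_{ABC}}=\rho_{AC}$, $\ptr{A}{\rho_{ABC}}=\rho_{BC}$, a short computation gives
\begin{eqnarray}
\rS(\rho_{ABC}||\sigma)&=&\rS(\rho_{AC})-\rS(\rho_{ABC}),\\
\rS(\Phi(\rho_{ABC})||\Phi(\sigma))&=&\rS(\rho_C)-\rS(\rho_{BC}),
\end{eqnarray}
so that their difference is precisely $I(A:B|C)_\rho$.

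Granting this, the first inequality is immediate from the Proposition of \cite{Zhang}, once one checks that the operator in its exponent collapses correctly. Indeed $\log\sigma=\log\rho_{AC}$, $\Phi^*(\log\Phi(\rho_{ABC}))=\I_A\ot\log\rho_{BC}$ and $\Phi^*(\log\Phi(\sigma))=\I_A\ot\I_B\ot\log\rho_C$, all read as operators on $\cH_{ABC}$ with identities on the missing factors, whence
\begin{eqnarray}
\log\sigma+\Phi^*(\log\Phi(\rho_{ABC}))-\Phi^*(\log\Phi(\sigma))=\log\rho_{AC}+\log\rho_{BC}-\log\rho_C,
\end{eqnarray}
exactly the exponent appearing under the square root in the statement.

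To obtain the second and third inequalities I would apply the first Proposition with $\rho=\rho_{ABC}$ and $\sigma=\exp\Pa{\log\rho_{AC}+\log\rho_{BC}-\log\rho_C}$; this is permitted provided $\sigma$ is a substate, and verifying $\Tr{\sigma}\leqslant1$ is the one genuinely nontrivial step. I expect to settle it through Lieb's triple matrix inequality $\Tr{\exp(\log P-\log Q+\log R)}\leqslant\int_0^\infty\Tr{P(Q+s)^{-1}R(Q+s)^{-1}}\,ds$ with $P=\rho_{AC}$, $Q=\rho_C$, $R=\rho_{BC}$. Performing the trace over $A$ reduces the integrand to $\ptr{BC}{\Pa{\I_B\ot\rho_C(\rho_C+s)^{-2}}\rho_{BC}}$, and since $\int_0^\infty\rho_C(\rho_C+s)^{-2}\,ds$ is the projection onto the support of $\rho_C$, the $s$-integral evaluates to $\Tr{\rho_C}=1$; hence $\Tr{\sigma}\leqslant1$. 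The matrix inequality recalled in the proof of the first Proposition then delivers the two remaining bounds. The whole difficulty is thus concentrated in this substate estimate, the rest being bookkeeping with logarithms and marginals; as a byproduct the third bound shows that $I(A:B|C)_\rho=0$ forces $\rho_{ABC}=\exp\Pa{\log\rho_{AC}+\log\rho_{BC}-\log\rho_C}$, the Markov-state characterisation advertised in the abstract.
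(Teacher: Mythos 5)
Your argument is correct, and it is essentially the derivation the paper intends: the paper explicitly says these listed inequalities come from ``combining'' its first two quoted propositions, and you combine exactly those, choosing $\Phi=\ptr{A}{\cdot}$ and $\sigma=\rho_{AC}\ot\I_B$ so that $\rS(\rho\|\sigma)-\rS(\Phi(\rho)\|\Phi(\sigma))=I(A:B|C)_\rho$ and the exponent collapses to $\log\rho_{AC}+\log\rho_{BC}-\log\rho_C$. Two remarks. First, the paper also records the more direct route, namely the identity $I(A:B|C)_\rho=\rS(\rho_{ABC}\|\exp(\log\rho_{AC}+\log\rho_{BC}-\log\rho_C))$ stated just before Section~2; feeding that single relative entropy into the first Proposition yields all three inequalities in one application and makes the channel-based step unnecessary, though your version has the merit of exhibiting the result as an instance of the strengthened monotonicity inequality. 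Second, the one substantive ingredient the paper asserts without proof (``We know that, for any state $\sigma_{ABC}$, $\Tr{\exp(\log\sigma_{AC}+\log\sigma_{BC}-\log\sigma_C)}\leqslant 1$'') is precisely the substate bound you supply via Lieb's triple matrix inequality, and your computation of the $s$-integral is right; this is a genuine addition, since without it the last two links of the chain do not follow from the first Proposition. A cosmetic point: the second Proposition is stated for $\sigma\in\density{\cH}$, whereas $\rho_{AC}\ot\I_B$ is unnormalized; replace it by $\rho_{AC}\ot\I_B/d_B$ and observe that both sides of that Proposition are invariant under rescaling $\sigma\mapsto c\sigma$ because $\Phi^*$ is unital, so nothing changes.
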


Later Berta {\em et. al} \cite{Mark} present a R\'{e}nyi
generalization of quantum conditional mutual information
$I(A:B|C)_\rho$. We will employ some ideas from the paper
\cite{Mark} to derive some new entropy inequalities in this short
report. These inequalities obtained extends some well-known entropy
inequalities. We also obtain a condition under which a tripartite
operator becomes a Markov state, i.e. a state of vanishing
conditional mutual information.

Next, we give a brief introduction about the notation used here. We
consider only finite dimensional Hilbert space $\cH$. A
\emph{quantum state} $\rho$ on $\cH$ is a positive semi-definite
operator of trace one. The set of all quantum states on $\cH$ is
denoted by $\density{\cH}$. For each quantum state
$\rho\in\density{\cH}$, its von Neumann entropy is defined by
$\rS(\rho) := - \Tr{\rho\log\rho}$. The \emph{relative entropy} of
two mixed states $\rho$ and $\sigma$ is defined by
$$
\rS(\rho||\sigma) := \left\{\begin{array}{ll}
                             \Tr{\rho(\log\rho -
\log\sigma)}, & \text{if}\ \supp(\rho) \subseteq
\supp(\sigma), \\
                             +\infty, & \text{otherwise}.
                           \end{array}
\right.
$$
A \emph{quantum channel} $\Phi$ on $\cH$ is a trace-preserving
completely positive linear map defined over the set $\density{\cH}$.

The famous strong subadditivity (SSA) inequality of quantum
entropy, proved by Lieb and Ruskai in \cite{Lieb1973}, states that
\begin{eqnarray}\label{eq:SSA-1} \rS(\rho_{ABC}) +
\rS(\rho_C) \leqslant \rS(\rho_{AC}) + \rS(\rho_{BC}),
\end{eqnarray}
guaranteeing that $I(A:B|C)_\rho$ is nonnegative. Recently, the
operator extension of SSA is obtained by Kim in \cite{Kim2012}.
Following the line of Kim, Ruskai gives a family of new operator
inequalities in \cite{Ruskai2012}.

Ruskai is the first one to discuss the equality condition of SSA, that is, $I(A:B|C)_\rho = 0$. By
analyzing the equality condition of Golden-Thompson inequality, she
obtained the following characterization \cite{Ruskai2002}:
\begin{eqnarray}
I(A:B|C)_\rho = 0 \Longleftrightarrow \log\rho_{ABC} + \log\rho_C =
\log\rho_{AC} + \log\rho_{BC}.
\end{eqnarray}

Note here that conditional mutual information can be rewritten as
\begin{eqnarray}
I(A:B|C)_\rho = \rS(\rho_{ABC}||\exp(\log\rho_{AC} + \log\rho_{BC} - \log\rho_C)).
\end{eqnarray}
All we need to do is rewrite an involved quantity as a relative
entropy with the second argument being substate. Then
Prop.~\ref{eq:substate} is applied to get the desired inequality.

\section{Main results}

\begin{prop}[\cite{Mark}]
It holds that
\begin{eqnarray}
&&\rS(\rho_{ABC}||\exp(\log\sigma_{AC} + \log\tau_{BC} - \log\omega_C))\\
&&= I(A:B|C)_\rho + \rS(\rho_{AC}||\sigma_{AC}) + \rS(\rho_{BC}||\tau_{BC}) - \rS(\rho_C||\omega_C),
\end{eqnarray}
where $\rho_{ABC}\in\density{\cH_{ABC}}$, $\sigma_{AC} \in\density{\cH_{AC}},\tau_{BC}\in\density{\cH_{BC}}$, and $\omega_C\in\density{\cH_C}$.
\end{prop}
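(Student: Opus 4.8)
The plan is to prove the identity by direct expansion of both sides into the elementary quantities $\Tr{X\log Y}$ and then to observe a cancellation. I would first abbreviate the second argument of the relative entropy on the left-hand side as $\Omega := \exp(\log\sigma_{AC} + \log\tau_{BC} - \log\omega_C)$, where each operator is understood to carry the identity on its complementary system (so that $\log\sigma_{AC}$ means $\log\sigma_{AC}\otimes\I_B$, and likewise for the others). The crucial opening observation is that, since $\Omega$ is by construction the exponential of a Hermitian operator, its logarithm is exactly that exponent:
\begin{eqnarray*}
\log\Omega = \log\sigma_{AC} + \log\tau_{BC} - \log\omega_C.
\end{eqnarray*}
No Baker--Campbell--Hausdorff expansion is needed, and in particular this step requires no commutativity among the three operators $\sigma_{AC},\tau_{BC},\omega_C$.

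Next I would expand the left-hand side using $\rS(\rho||\sigma) = \Tr{\rho\log\rho} - \Tr{\rho\log\sigma}$, giving $\rS(\rho_{ABC}||\Omega) = \Tr{\rho_{ABC}\log\rho_{ABC}} - \Tr{\rho_{ABC}\log\Omega}$. The second trace then reduces term by term through the partial-trace identity $\Tr{\rho_{ABC}(X_{AC}\otimes\I_B)} = \Tr{\rho_{AC}X_{AC}}$ and its analogues, yielding
\begin{eqnarray*}
\Tr{\rho_{ABC}\log\Omega} = \Tr{\rho_{AC}\log\sigma_{AC}} + \Tr{\rho_{BC}\log\tau_{BC}} - \Tr{\rho_C\log\omega_C}.
\end{eqnarray*}

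I would then expand the right-hand side. Writing out $I(A:B|C)_\rho$ with $\rS(\cdot) = -\Tr{\cdot\log\cdot}$, and each of the three relative entropies $\rS(\rho_{AC}||\sigma_{AC})$, $\rS(\rho_{BC}||\tau_{BC})$, $\rS(\rho_C||\omega_C)$ in the same fashion, I expect every term pairing a reduced state with its \emph{own} logarithm --- namely $\Tr{\rho_{AC}\log\rho_{AC}}$, $\Tr{\rho_{BC}\log\rho_{BC}}$, and $\Tr{\rho_C\log\rho_C}$ --- to cancel between the conditional mutual information and the corresponding relative-entropy term. What survives is precisely $\Tr{\rho_{ABC}\log\rho_{ABC}} - \Tr{\rho_{AC}\log\sigma_{AC}} - \Tr{\rho_{BC}\log\tau_{BC}} + \Tr{\rho_C\log\omega_C}$, which matches the expanded left-hand side, completing the argument.

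The computation carries no serious obstacle: the identity is fundamentally a bookkeeping cancellation made possible by the fact that $\log\Omega$ is linear in the three log-operators. The only point genuinely requiring care is the support and finiteness condition. I would assume $\supp(\rho_{ABC})\subseteq\supp(\Omega)$, together with the containments $\supp(\rho_{AC})\subseteq\supp(\sigma_{AC})$, $\supp(\rho_{BC})\subseteq\supp(\tau_{BC})$, and $\supp(\rho_C)\subseteq\supp(\omega_C)$, so that each relative entropy is finite and every trace appearing in the rearrangement is well defined; the cancellation above then justifies the equality of the two finite expressions.
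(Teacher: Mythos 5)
Your proof is correct: the paper itself states this proposition without proof (it is imported from Berta \emph{et al.}), and your direct expansion --- using $\log\Omega=\log\sigma_{AC}+\log\tau_{BC}-\log\omega_C$, the partial-trace reductions, and the term-by-term cancellation of $\Tr{\rho_{AC}\log\rho_{AC}}$, $\Tr{\rho_{BC}\log\rho_{BC}}$, $\Tr{\rho_C\log\rho_C}$ --- is exactly the standard bookkeeping argument that establishes it. Your care with the support conditions is appropriate and is the only point where the identity could fail to be a statement about finite quantities.
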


This identity leads to the following result:
\begin{prop}[\cite{Mark}]
It holds that
\begin{eqnarray}
&&\rS(\rho_{ABC}||\exp(\log\sigma_{AC} + \log\sigma_{BC} - \log\sigma_C))\\
&&= I(A:B|C)_\rho + \rS(\rho_{AC}||\sigma_{AC}) + \rS(\rho_{BC}||\sigma_{BC}) - \rS(\rho_C||\sigma_C),
\end{eqnarray}
where $\rho_{ABC}, \sigma_{ABC}\in\density{\cH_{ABC}}$.
\end{prop}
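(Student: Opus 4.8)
The plan is to obtain the stated identity as an immediate specialization of the preceding proposition. In that more general statement the three reference operators $\sigma_{AC}$, $\tau_{BC}$, and $\omega_C$ are arbitrary states on $AC$, $BC$, and $C$; here I would simply take them to be the marginals of a single global state $\sigma_{ABC}$, namely $\sigma_{AC} = \Ptr{B}{\sigma_{ABC}}$, $\sigma_{BC} = \Ptr{A}{\sigma_{ABC}}$, and $\sigma_C = \Ptr{AB}{\sigma_{ABC}}$. Substituting these choices into the general identity reproduces exactly the desired equation, since each relative-entropy term $\rS(\rho_{XY}||\sigma_{XY})$ then refers to the matching marginal of $\sigma_{ABC}$. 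The only hypothesis to check is that these partial traces are genuine density operators, which is automatic, so the general proposition applies verbatim.

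For completeness I would also verify the identity directly, giving a self-contained argument that does not invoke the more general statement. Write $\omega_{ABC} := \exp(\log\sigma_{AC} + \log\sigma_{BC} - \log\sigma_C)$, where, following the convention of the paper, each logarithm acts as the identity on the complementary subsystem. Because the exponent is Hermitian, $\omega_{ABC}$ is positive definite and $\log\omega_{ABC} = \log\sigma_{AC} + \log\sigma_{BC} - \log\sigma_C$. Expanding the relative entropy then gives
\begin{eqnarray*}
\rS(\rho_{ABC}||\omega_{ABC}) &=& \Tr{\rho_{ABC}\log\rho_{ABC}} - \Tr{\rho_{ABC}\log\sigma_{AC}} \\
&& {} - \Tr{\rho_{ABC}\log\sigma_{BC}} + \Tr{\rho_{ABC}\log\sigma_C}.
\end{eqnarray*}

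Next I would apply the partial-trace identity $\Tr{\rho_{ABC}\log\sigma_{AC}} = \Tr{\rho_{AC}\log\sigma_{AC}}$, and similarly for the $BC$ and $C$ terms, to rewrite each trace over the relevant marginal of $\rho$. Regrouping by means of $-\Tr{\rho_{XY}\log\sigma_{XY}} = \rS(\rho_{XY}||\sigma_{XY}) + \rS(\rho_{XY})$ converts the expression into the sum of the three relative entropies together with the von Neumann entropy combination $\rS(\rho_{AC}) + \rS(\rho_{BC}) - \rS(\rho_{ABC}) - \rS(\rho_C)$, which is precisely $I(A:B|C)_\rho$ by definition; collecting terms yields the claimed equality. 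The computation is routine bookkeeping, so the main point requiring care is the support/domain condition: the logarithms, hence $\omega_{ABC}$, are well defined and the relative entropies finite only under the usual faithfulness and support-inclusion hypotheses. I would therefore assume invertibility of $\sigma_{ABC}$ and its marginals (or restrict everything to the appropriate supports), so that both the substitution into the previous proposition and the step $\log\exp(\cdot) = (\cdot)$ are legitimate. With that caveat the statement is an exact algebraic rearrangement, and there is no genuine analytic obstacle.
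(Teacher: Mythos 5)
Your proposal is correct and its main route is exactly the paper's: the paper offers no separate proof, deriving this proposition purely by specializing the preceding identity to $\tau_{BC}=\sigma_{BC}$ and $\omega_C=\sigma_C$, i.e.\ taking all three reference operators to be marginals of one state $\sigma_{ABC}$. Your supplementary direct computation (expanding $\log\exp(\cdot)$ and regrouping the traces into relative entropies plus $I(A{:}B|C)_\rho$) is also correct and, together with your remark on support/invertibility, is if anything more careful than the paper.
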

Using monotonicity of relative entropy, we have
$$
\rS(\rho_{AC}||\sigma_{AC}) \geqslant \rS(\rho_C||\sigma_C)~~\text{and}~~\rS(\rho_{BC}||\sigma_{BC}) \geqslant \rS(\rho_C||\sigma_C).
$$
This yields that
$$
\frac12\Br{\rS(\rho_{AC}||\sigma_{AC}) + \rS(\rho_{BC}||\sigma_{BC})} \geqslant \rS(\rho_C||\sigma_C).
$$
Therefore, we can draw the following conclusion:
\begin{thrm}\label{th:super-strong-additivity}
It holds that
\begin{eqnarray}
&&\rS(\rho_{ABC}||\exp(\log\sigma_{AC} + \log\sigma_{BC} - \log\sigma_C))\\
&&\geqslant I(A:B|C)_\rho + \frac12\rS(\rho_{AC}||\sigma_{AC}) + \frac12\rS(\rho_{BC}||\sigma_{BC}),
\end{eqnarray}
where $\rho_{ABC}, \sigma_{ABC}\in\density{\cH_{ABC}}$.
\end{thrm}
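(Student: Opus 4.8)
The plan is to read off the bound directly from the operator identity established in the proposition immediately preceding the theorem, whose statement rewrites the left-hand side exactly as
\begin{eqnarray*}
&&\rS(\rho_{ABC}||\exp(\log\sigma_{AC} + \log\sigma_{BC} - \log\sigma_C))\\
&&= I(A:B|C)_\rho + \rS(\rho_{AC}||\sigma_{AC}) + \rS(\rho_{BC}||\sigma_{BC}) - \rS(\rho_C||\sigma_C).
\end{eqnarray*}
On the right every term is nonnegative except for the subtracted quantity $\rS(\rho_C||\sigma_C)$, so all I really need is a suitable upper bound on that one term expressed through the quantities I want to retain.

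First I would invoke monotonicity of relative entropy under quantum channels (the data-processing inequality). The partial traces $\trace_A$ and $\trace_B$ are trace-preserving completely positive maps, sending $\density{\cH_{AC}}$ and $\density{\cH_{BC}}$ respectively onto $\density{\cH_C}$, and carrying $\rho_{AC}\mapsto\rho_C$, $\sigma_{AC}\mapsto\sigma_C$ and $\rho_{BC}\mapsto\rho_C$, $\sigma_{BC}\mapsto\sigma_C$. Monotonicity then gives the two inequalities
\begin{eqnarray*}
\rS(\rho_{AC}||\sigma_{AC}) \geqslant \rS(\rho_C||\sigma_C) \quad\text{and}\quad \rS(\rho_{BC}||\sigma_{BC}) \geqslant \rS(\rho_C||\sigma_C),
\end{eqnarray*}
whose average produces the single upper bound $\rS(\rho_C||\sigma_C) \leqslant \tfrac12\Br{\rS(\rho_{AC}||\sigma_{AC}) + \rS(\rho_{BC}||\sigma_{BC})}$.

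The last step is simply to substitute this upper bound in place of $-\rS(\rho_C||\sigma_C)$ in the identity above. Each of the full terms $\rS(\rho_{AC}||\sigma_{AC})$ and $\rS(\rho_{BC}||\sigma_{BC})$ then loses half its weight, and what remains is precisely $I(A:B|C)_\rho + \tfrac12\rS(\rho_{AC}||\sigma_{AC}) + \tfrac12\rS(\rho_{BC}||\sigma_{BC})$, which is the asserted lower bound.

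I do not expect a genuine obstacle: once the identity and data processing are available the argument collapses to a one-line substitution, and essentially all of the content is inherited from the preceding proposition. The only points meriting a moment of care are bookkeeping ones — verifying that the two maps used in the monotonicity step really are legitimate channels (the partial traces are), and checking the support/finiteness hypotheses entering $\rS(\cdot||\cdot)$ so that the subtraction and averaging are arithmetically meaningful; the degenerate cases where some relative entropy is $+\infty$ are handled directly by the convention defining relative entropy and only make the inequality easier to satisfy.
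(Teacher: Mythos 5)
Your proposal is correct and follows exactly the paper's argument: the paper likewise starts from the Berta--Seshadreesan--Wilde identity stated just before the theorem, applies monotonicity of relative entropy under the partial traces to get $\rS(\rho_{AC}||\sigma_{AC}) \geqslant \rS(\rho_C||\sigma_C)$ and $\rS(\rho_{BC}||\sigma_{BC}) \geqslant \rS(\rho_C||\sigma_C)$, and averages these to replace $-\rS(\rho_C||\sigma_C)$ in the identity. Your added remarks about verifying the partial traces are channels and handling the infinite cases are sensible bookkeeping but do not change the route.
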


\begin{cor}
For two tripartite states $\rho_{ABC}, \sigma_{ABC}\in\density{\cH_{ABC}}$, it holds that
$$
\rS(\rho_{ABC}||\exp(\log\sigma_{AC} + \log\sigma_{BC} - \log\sigma_C)) \geqslant 0.
$$
In particular, $\rS(\rho_{ABC}||\exp(\log\rho_{AC} + \log\rho_{BC} -
\log\rho_C)) \geqslant 0$, i.e. $I(A:B|C)_\rho\geqslant0$, the
strong subadditivity inequality. Moreover
$\rS(\rho_{ABC}||\exp(\log\sigma_{AC} + \log\sigma_{BC} -
\log\sigma_C)) = 0$ if and only if $\rho_{ABC} =
\exp(\log\sigma_{AC} + \log\sigma_{BC} - \log\sigma_C)$.
\end{cor}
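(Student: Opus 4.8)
The plan is to read everything off Theorem~\ref{th:super-strong-additivity} together with the nonnegativity and equality clauses of the first Proposition. First I would establish the nonnegativity claim. By Theorem~\ref{th:super-strong-additivity},
\begin{equation*}
\rS(\rho_{ABC}||\exp(\log\sigma_{AC} + \log\sigma_{BC} - \log\sigma_C)) \geqslant I(A:B|C)_\rho + \tfrac12\rS(\rho_{AC}||\sigma_{AC}) + \tfrac12\rS(\rho_{BC}||\sigma_{BC}).
\end{equation*}
Each of the three summands on the right is nonnegative: $I(A:B|C)_\rho \geqslant 0$ is exactly the strong subadditivity inequality \eqref{eq:SSA-1}, while $\rS(\rho_{AC}||\sigma_{AC})\geqslant 0$ and $\rS(\rho_{BC}||\sigma_{BC})\geqslant 0$ follow from the nonnegativity of relative entropy between genuine states (the first Proposition with $\sigma$ a state, whose chain terminates in $\tfrac14\norm{\cdot}^2_1\geqslant 0$). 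Hence the left-hand side is nonnegative. For the ``in particular'' clause I would specialize to $\sigma_{ABC} = \rho_{ABC}$, so that $\sigma_{AC}=\rho_{AC}$, $\sigma_{BC}=\rho_{BC}$, $\sigma_C=\rho_C$, the two relative-entropy corrections vanish, and, using the rewriting $I(A:B|C)_\rho = \rS(\rho_{ABC}||\exp(\log\rho_{AC} + \log\rho_{BC} - \log\rho_C))$ noted in the introduction, the displayed quantity collapses to $I(A:B|C)_\rho\geqslant 0$, which is precisely SSA.

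The equality condition is where the real content sits. Write $\tau := \exp(\log\sigma_{AC} + \log\sigma_{BC} - \log\sigma_C)$. The point I would stress is that $\tau$ is not merely positive but is a \emph{substate}, i.e. $\Tr{\tau}\leqslant 1$; this is the operator analogue of SSA and follows from Lieb's triple-operator inequality, which (with the marginals suitably ampliated to $\cH_{ABC}$) bounds $\Tr{\tau}$ by $\int_0^\infty \Tr{\sigma_{AC}(\sigma_C + t)^{-1}\sigma_{BC}(\sigma_C + t)^{-1}}\,dt = 1$. Once $\tau$ is known to be a substate, the equality clause of the first Proposition applies verbatim to the state $\rho_{ABC}$ and the substate $\tau$: the chain $\rS(\rho_{ABC}||\tau)\geqslant \tfrac14\norm{\rho_{ABC} - \tau}^2_1$ forces $\rho_{ABC} = \tau$ whenever the relative entropy vanishes, and the converse $\rho_{ABC}=\tau\Rightarrow\rS=0$ is immediate. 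This yields exactly the stated equivalence $\rS(\rho_{ABC}||\tau)=0 \iff \rho_{ABC}=\tau$.

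The only genuine obstacle is the substate property $\Tr{\tau}\leqslant 1$: without it the equality clause of the first Proposition cannot be invoked, since that clause is phrased for a substate second argument. I expect to discharge it by citing (or reproving via Lieb's concavity and the triple-matrix inequality) the standard estimate $\Tr{\exp(\log\sigma_{AC} + \log\sigma_{BC} - \log\sigma_C)}\leqslant 1$; everything else reduces to bookkeeping on Theorem~\ref{th:super-strong-additivity}, SSA, and nonnegativity of relative entropy. A minor point to verify along the way is the support condition $\supp(\rho_{ABC})\subseteq\supp(\tau)$ needed for finiteness of the relative entropy; when it fails both the inequality and the equivalence are trivially handled by the $+\infty$ convention.
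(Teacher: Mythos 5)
Your proposal is correct, and the nonnegativity part coincides with the paper's: both read it off Theorem~\ref{th:super-strong-additivity} together with $I(A:B|C)_\rho\geqslant 0$ and the nonnegativity of the two relative-entropy terms, and both obtain SSA by setting $\sigma_{ABC}=\rho_{ABC}$. Where you genuinely diverge is the equality condition. The paper argues through the decomposition itself: vanishing of the left-hand side forces $I(A:B|C)_\rho=0$, $\rS(\rho_{AC}||\sigma_{AC})=0$ and $\rS(\rho_{BC}||\sigma_{BC})=0$, hence $\rho_{AC}=\sigma_{AC}$, $\rho_{BC}=\sigma_{BC}$, $\rho_C=\sigma_C$, so the two exponentials coincide; it then invokes Ruskai's characterization $I(A:B|C)_\rho=0\Longleftrightarrow\rho_{ABC}=\exp(\log\rho_{AC}+\log\rho_{BC}-\log\rho_C)$ to conclude. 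You instead note that $\tau=\exp(\log\sigma_{AC}+\log\sigma_{BC}-\log\sigma_C)$ is a substate (via Lieb's triple-matrix inequality, a fact the paper also asserts and uses in its final two theorems) and apply the equality clause of the first Proposition directly, so that $\rS(\rho_{ABC}||\tau)=0$ forces $\norm{\rho_{ABC}-\tau}_1=0$. Your route is more self-contained in that it bypasses Ruskai's equality analysis of SSA entirely, at the cost of having to establish $\Tr{\tau}\leqslant 1$ up front; the paper's route gets the marginal identities $\rho_{AC}=\sigma_{AC}$, $\rho_{BC}=\sigma_{BC}$ as a by-product, which it exploits in the discussion that follows the corollary. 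Your attention to the support condition and the $+\infty$ convention is a point the paper glosses over.
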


If $\rS(\rho_{ABC}||\exp(\log\sigma_{AC} + \log\sigma_{BC} - \log\sigma_C)) = 0$, then using Theorem~\ref{th:super-strong-additivity}, we have
\begin{eqnarray}
\begin{cases}
I(A:B|C)_\rho &= 0;\\
\rS(\rho_{AC}||\sigma_{AC}) &=0;\\
\rS(\rho_{BC}||\sigma_{BC}) &=0.
\end{cases}
\end{eqnarray}
This leads to the following:
\begin{eqnarray}
\rho_{AC} = \sigma_{AC},~~\rho_{BC} = \sigma_{BC}.
\end{eqnarray}
Thus $\rho_C=\sigma_C$. This indicates that
$$
\exp(\log\rho_{AC} + \log\rho_{BC} - \log\rho_C) = \exp(\log\sigma_{AC} + \log\sigma_{BC} - \log\sigma_C).
$$
Note that $I(A:B|C)_\rho = 0$ if and only if $\exp(\log\rho_{AC} + \log\rho_{BC} - \log\rho_C) = \rho_{ABC}$. Therefore $\exp(\log\sigma_{AC} + \log\sigma_{BC} - \log\sigma_C) = \rho_{ABC}$. From the above-mentioned process, it follows that
$$
\rS(\rho_{ABC}||\exp(\log\sigma_{AC} + \log\sigma_{BC} - \log\sigma_C))=0 \Longrightarrow \rho_{ABC} = \exp(\log\sigma_{AC} + \log\sigma_{BC} - \log\sigma_C).
$$
We know that, for any state $\sigma_{ABC}\in\density{\cH_{ABC}}$,
$$
\Tr{\exp(\log\sigma_{AC} + \log\sigma_{BC} - \log\sigma_C)} \leqslant 1.
$$
But what will happens if $\Tr{\exp(\log\sigma_{AC} + \log\sigma_{BC} - \log\sigma_C)} = 1$? In order to answer this question, we form an operator for any state $\sigma_{ABC}\in\density{\cH_{ABC}}$,
$$
\exp(\log\sigma_{AC} + \log\sigma_{BC} - \log\sigma_C).
$$
If $\exp(\log\sigma_{AC} + \log\sigma_{BC} - \log\sigma_C)$ is a valid state, denoted by $\rho_{ABC}$, then
$$
\rho_{AC} = \ptr{B}{\exp(\log\sigma_{AC} + \log\sigma_{BC} - \log\sigma_C)},~\rho_{BC} = \ptr{A}{\exp(\log\sigma_{AC} + \log\sigma_{BC} - \log\sigma_C)},
$$
and $\rho_C = \ptr{AB}{\exp(\log\sigma_{AC} + \log\sigma_{BC} - \log\sigma_C)}$. Furthermore
$\rS(\rho_{ABC}||\exp(\log\sigma_{AC} + \log\sigma_{BC} - \log\sigma_C))=0$. Thus $I(A:B|C)_\rho = 0$, i.e. $\exp(\log\sigma_{AC} + \log\sigma_{BC} - \log\sigma_C)$ is a Markov state.
\begin{thrm}
Given a state $\rho_{ABC}$. We form an operator $\exp(\log\rho_{AC} + \log\rho_{BC} - \log\rho_C)$. If
$$
\Tr{\exp(\log\rho_{AC} + \log\rho_{BC} - \log\rho_C)}=1,
$$
then the following statements are valid:
\begin{enumerate}[(i)]
\item $\exp(\log\rho_{AC} + \log\rho_{BC} - \log\rho_C) =
\rho^{1/2}_{AB}\rho^{-1/2}_B\rho_{BC}\rho^{-1/2}_B\rho^{1/2}_{AB}$;
\item $\exp(\log\rho_{AC} + \log\rho_{BC} - \log\rho_C) =
\rho^{1/2}_{BC}\rho^{-1/2}_B\rho_{AB}\rho^{-1/2}_B\rho^{1/2}_{BC}$.
\end{enumerate}
Therefore $\exp(\log\rho_{AC} + \log\rho_{BC} - \log\rho_C)$ must be
a Markov state.
\end{thrm}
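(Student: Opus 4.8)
The plan is to apply the identity of Berta \emph{et al.}~\cite{Mark} recalled at the start of this section, namely
\[
\rS(\rho_{ABC}||\exp(\log\sigma_{AC}+\log\sigma_{BC}-\log\sigma_C)) = I(A:B|C)_\rho + \rS(\rho_{AC}||\sigma_{AC}) + \rS(\rho_{BC}||\sigma_{BC}) - \rS(\rho_C||\sigma_C),
\]
but with the two states \emph{interchanged}. Write $\omega := \exp(\log\rho_{AC}+\log\rho_{BC}-\log\rho_C)$. Since $\omega$ is positive semidefinite and the hypothesis $\Tr{\omega}=1$ makes it a bona fide state, I would insert $\omega$ in the first slot and the given $\rho$ in the reference slot $\sigma$. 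The reference operator $\exp(\log\rho_{AC}+\log\rho_{BC}-\log\rho_C)$ is then \emph{exactly} $\omega$, so the left-hand side collapses to $\rS(\omega||\omega)=0$.

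This produces $0 = I(A:B|C)_\omega + \rS(\omega_{AC}||\rho_{AC}) + \rS(\omega_{BC}||\rho_{BC}) - \rS(\omega_C||\rho_C)$. Next I would invoke monotonicity of the relative entropy under the partial traces over $A$ and over $B$, exactly as in the proof of Theorem~\ref{th:super-strong-additivity}, to bound $\rS(\omega_C||\rho_C)$ by $\tfrac12[\rS(\omega_{AC}||\rho_{AC}) + \rS(\omega_{BC}||\rho_{BC})]$. The identity then becomes $0 \geqslant I(A:B|C)_\omega + \tfrac12\rS(\omega_{AC}||\rho_{AC}) + \tfrac12\rS(\omega_{BC}||\rho_{BC})$, a sum of three nonnegative terms that is at most zero, so each must vanish. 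Hence $I(A:B|C)_\omega=0$, while $\rS(\omega_{AC}||\rho_{AC})=0$ and $\rS(\omega_{BC}||\rho_{BC})=0$ give, through the equality clause of Prop.~\ref{eq:substate}, the marginal identities $\omega_{AC}=\rho_{AC}$ and $\omega_{BC}=\rho_{BC}$; tracing out then forces $\omega_C=\rho_C$ as well.

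It remains to convert the Markov condition $I(A:B|C)_\omega=0$ into the explicit forms. For a state with vanishing conditional mutual information, the structure theorem for Markov states (equivalently, the equality case of the Petz recovery map) gives $\omega = \omega_{AC}^{1/2}\omega_C^{-1/2}\omega_{BC}\omega_C^{-1/2}\omega_{AC}^{1/2}$; substituting $\omega_{AC}=\rho_{AC}$, $\omega_{BC}=\rho_{BC}$, $\omega_C=\rho_C$ yields~(i), and recovering symmetrically through $A$ (or using the $A\leftrightarrow B$ symmetry of the construction) yields~(ii). The closing assertion of the theorem is then immediate, since $I(A:B|C)_\omega=0$ \emph{is} the Markov property.

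I expect this last step to be the main obstacle. The vanishing of $I(A:B|C)_\omega$ and the matching of the marginals fall out almost mechanically from the interchanged identity, but upgrading Ruskai's logarithmic equality condition $\log\omega = \log\omega_{AC}+\log\omega_{BC}-\log\omega_C$ to the square-root recovery expressions in~(i) and~(ii) requires the Markov structure theorem, which lies outside the self-contained machinery assembled in this note. I would also take care at the very outset to restrict every logarithm and inverse to the relevant supports, so that $\omega$ is well defined and the support hypotheses of the identity are genuinely met.
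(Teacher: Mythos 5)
Your argument for the Markov property is exactly the paper's own: the paper's entire justification is the informal paragraph preceding the theorem, which likewise treats $\omega=\exp(\log\rho_{AC}+\log\rho_{BC}-\log\rho_C)$ as a bona fide state once its trace is $1$, observes that $\rS(\omega\|\omega)=0$, and feeds this into the Berta \emph{et al.} identity (via Theorem~\ref{th:super-strong-additivity}) to force $I(A:B|C)_\omega=0$ together with $\omega_{AC}=\rho_{AC}$, $\omega_{BC}=\rho_{BC}$ and hence $\omega_C=\rho_C$. For the explicit forms (i) and (ii) the paper supplies no proof at all, so your appeal to the structure theorem for Markov states (equality in the Petz recovery map) is the natural way to obtain them, and you are right that this ingredient lies outside the note's self-contained machinery. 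One caution: what your substitution actually yields is $\rho_{AC}^{1/2}\rho_C^{-1/2}\rho_{BC}\rho_C^{-1/2}\rho_{AC}^{1/2}$ and its $A\leftrightarrow B$ mirror $\rho_{BC}^{1/2}\rho_C^{-1/2}\rho_{AC}\rho_C^{-1/2}\rho_{BC}^{1/2}$, not the expressions printed in (i) and (ii), which pivot on $\rho_{AB}$ and $\rho_B$; since the recovery map associated with $I(A:B|C)=0$ must condition on $C$, the printed subscripts appear to be typos in the statement, but as written your derivation does not literally reproduce them, and you should say so explicitly rather than assert that the substitution ``yields (i)''.
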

From the above result, we see that if a state $\rho_{ABC}$ can be expressed by the form of $\exp(\log\sigma_{AC} + \log\sigma_{BC} - \log\sigma_C)$ for some state $\sigma_{ABC}$, then $\rho_{ABC}$ must ba a Markov state.

A question naturally arises: Which states $\rho_{ABC}$ are such that $\exp(\log\rho_{AC} + \log\rho_{BC} - \log\rho_C)$ is a Markov state? In other words, we are interested in the structure of the following set:
\begin{eqnarray}
\Set{\rho_{ABC}\in\density{\cH_{ABC}}: \Tr{\exp(\log\rho_{AC} + \log\rho_{BC} - \log\rho_C)}=1}.
\end{eqnarray}

\begin{thrm}
For two tripartite states $\rho_{ABC}, \sigma_{ABC}\in\density{\cH_{ABC}}$, it holds that
\begin{eqnarray}
&&\rS(\rho_{ABC}||\exp(\log\sigma_{AC} + \log\sigma_{BC} - \log\sigma_C))\\
&&\geqslant -2\log\Tr{\sqrt{\rho_{ABC}}\sqrt{\exp(\log\sigma_{AC} - \log\sigma_C + \log\sigma_{BC})}}\\
&&\geqslant\norm{\sqrt{\rho_{ABC}} - \sqrt{\exp(\log\sigma_{AC} - \log\sigma_C + \log\sigma_{BC})}}^2_2\\
&&\geqslant \frac14\norm{\rho_{ABC} - \exp(\log\sigma_{AC} - \log\sigma_C + \log\sigma_{BC})}^2_1.
\end{eqnarray}
\end{thrm}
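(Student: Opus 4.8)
The plan is to recognize the entire chain as a single instance of the substate inequality, namely Prop.~\ref{eq:substate}, applied to $\rho_{ABC}$ against the operator
\[
\tau \;:=\; \exp(\log\sigma_{AC} + \log\sigma_{BC} - \log\sigma_C).
\]
Being the exponential of a Hermitian operator, $\tau$ is positive semidefinite, so the only thing that must be checked before quoting Prop.~\ref{eq:substate} is that $\tau$ is a \emph{substate}, i.e. that $\Tr{\tau}\leqslant 1$.

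This trace bound is precisely the fact already recorded above: for every $\sigma_{ABC}\in\density{\cH_{ABC}}$ one has $\Tr{\exp(\log\sigma_{AC} + \log\sigma_{BC} - \log\sigma_C)}\leqslant 1$. I would invoke it verbatim. With $\rho=\rho_{ABC}$ a genuine state and $\sigma=\tau$ a substate on $\cH_{ABC}$, Prop.~\ref{eq:substate} applies directly and yields
\begin{align*}
\rS(\rho_{ABC}||\tau)
&\geqslant -2\log\Tr{\sqrt{\rho_{ABC}}\sqrt{\tau}}\\
&\geqslant \norm{\sqrt{\rho_{ABC}} - \sqrt{\tau}}^2_2
\geqslant \frac14\norm{\rho_{ABC} - \tau}^2_1 .
\end{align*}
Substituting back the definition of $\tau$ reproduces the three displayed inequalities exactly. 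Should $\supp(\rho_{ABC})\not\subseteq\supp(\tau)$, the left-hand side is $+\infty$ and every inequality holds trivially, so there is no loss in assuming finiteness.

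The main obstacle is therefore not in this derivation, which is a one-line specialization, but in the trace bound $\Tr{\tau}\leqslant 1$ on which it rests. That bound is an operator reformulation closely tied to strong subadditivity, provable by a Golden--Thompson type argument (or via Lieb's concavity theorem); since it is already available to us as a known fact, nothing further is required. Everything else is packaged inside Prop.~\ref{eq:substate}: the first inequality comes from monotonicity of R\'{e}nyi relative entropy at order $1/2$, and the remaining two from the Powers--St\o rmer matrix inequality quoted in its proof, used together with $\Tr{\tau}\leqslant 1$ and $\Tr{\rho_{ABC}}=1$.
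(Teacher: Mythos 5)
Your proof is correct and follows exactly the paper's own argument: observe that $\Tr{\exp(\log\sigma_{AC} + \log\sigma_{BC} - \log\sigma_C)}\leqslant 1$, so the exponential is a substate, and then apply Prop.~1.1 directly. The extra remarks on positivity, the support condition, and the origin of the trace bound are sound but not needed beyond what the paper already records.
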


\begin{proof}
Since $\Tr{\exp(\log\sigma_{AC} + \log\sigma_{BC} -
\log\sigma_C)}\leqslant1$, that is $\exp(\log\sigma_{AC} +
\log\sigma_{BC} - \log\sigma_C)$ is a substate, it follows from
\eqref{eq:substate} that the desired inequality is true.
\end{proof}

\begin{thrm}
For a tripartite state $\rho_{ABC}\in\density{\cH_{ABC}}$, it holds
that
\begin{eqnarray}
\rS(\rho_{AC}) + \rS(\rho_{BC}) - \rS(\rho_{ABC}) &\geqslant&
-2\log\Tr{\sqrt{\rho_{ABC}}\sqrt{\exp(\log\rho_{AC}+\log\rho_{BC})}}\\
&\geqslant&\norm{\sqrt{\rho_{ABC}} - \sqrt{\exp(\log\sigma_{AC} + \log\sigma_{BC})}}^2_2\\
&&\geqslant \frac14\norm{\rho_{ABC} - \exp(\log\sigma_{AC} +
\log\sigma_{BC})}^2_1.
\end{eqnarray}
\end{thrm}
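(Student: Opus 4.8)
The plan is to recognise that the left-hand side is itself a relative entropy whose second argument turns out to be a substate, and then to invoke the substate inequality \eqref{eq:substate}. This is the same strategy as in the preceding theorem; the only genuinely new point is that here the exponent carries no $-\log\rho_C$ correction term, so the substate property is no longer automatic and must be established by hand. (I read the three $\sigma$'s in the second and third lines of the statement as $\rho$'s, since only the single state $\rho_{ABC}$ is in play.)

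First I would set $\omega := \exp(\log\rho_{AC} + \log\rho_{BC})$, where $\log\rho_{AC}$ and $\log\rho_{BC}$ denote the canonical embeddings of these operators into $\Lin{\cH_{ABC}}$ (identity on $B$ and on $A$, respectively). Because $\omega$ is by definition the exponential of the Hermitian operator $\log\rho_{AC}+\log\rho_{BC}$, we have $\log\omega = \log\rho_{AC}+\log\rho_{BC}$, so
\[
\rS(\rho_{ABC}||\omega) = \Tr{\rho_{ABC}\log\rho_{ABC}} - \Tr{\rho_{ABC}\log\rho_{AC}} - \Tr{\rho_{ABC}\log\rho_{BC}}.
\]
Since $\rho_{AC} = \ptr{B}{\rho_{ABC}}$ and $\log\rho_{AC}$ acts trivially on $B$, the middle term equals $\Tr{\rho_{AC}\log\rho_{AC}} = -\rS(\rho_{AC})$, and likewise the last term equals $-\rS(\rho_{BC})$. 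Hence $\rS(\rho_{ABC}||\omega) = \rS(\rho_{AC}) + \rS(\rho_{BC}) - \rS(\rho_{ABC})$, which is exactly the quantity on the left of the theorem.

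Next I would verify that $\omega$ is a substate, i.e. $\Tr{\omega}\leqslant 1$. By the Golden--Thompson inequality,
\[
\Tr{\exp(\log\rho_{AC} + \log\rho_{BC})} \leqslant \Tr{\rho_{AC}\,\rho_{BC}},
\]
the product on the right again being taken between the embedded operators. A short marginalisation---tracing out $A$ from the first factor and $B$ from the second, each leaving $\rho_C$ on the shared system $C$---gives $\Tr{\rho_{AC}\,\rho_{BC}} = \Tr{\rho_C^2}$, and $\Tr{\rho_C^2}\leqslant 1$ because $\rho_C$ is a state; thus $\Tr{\omega}\leqslant 1$.

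Finally, with $\rho_{ABC}$ a state and $\omega$ a substate, the substate inequality \eqref{eq:substate} applied to the pair $(\rho_{ABC},\omega)$ delivers the three displayed inequalities at once. I expect the substate verification to be the main (indeed essentially the only) obstacle: the Golden--Thompson bound together with the identity $\Tr{\rho_{AC}\rho_{BC}} = \Tr{\rho_C^2}$ settles it. One should also flag the standard support condition $\supp(\rho_{ABC})\subseteq\supp(\omega)$, under which the relative entropy in the first step is finite and the rewriting identity is valid.
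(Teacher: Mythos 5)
Your proposal is correct and follows essentially the same route as the paper: rewrite $\rS(\rho_{AC}) + \rS(\rho_{BC}) - \rS(\rho_{ABC})$ as $\rS(\rho_{ABC}||\exp(\log\rho_{AC}+\log\rho_{BC}))$, show the second argument is a substate via Golden--Thompson and $\Tr{\rho_{AC}\rho_{BC}}=\Tr{\rho_C^2}\leqslant 1$, and then apply the substate inequality \eqref{eq:substate}. Your reading of the $\sigma$'s in the last two lines as $\rho$'s matches the paper's evident intent, and your remarks on the support condition are a harmless (and sensible) addition.
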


\begin{proof}
All we need to do is rewrite $\rS(\rho_{AC}) + \rS(\rho_{BC}) -
\rS(\rho_{ABC})$ as a relative entropy:
$$
\rS(\rho_{AC}) + \rS(\rho_{BC}) - \rS(\rho_{ABC}) =
\rS(\rho_{ABC}||\exp(\log\rho_{AC}+\log\rho_{BC})).
$$
Next we prove that $\exp(\log\rho_{AC}+\log\rho_{BC})$ is a
substate: using Golden-Thompson inequality, we have
\begin{eqnarray}
\Tr{\exp(\log\rho_{AC}+\log\rho_{BC})}&\leqslant&
\Tr{\exp(\log\rho_{AC})\exp(\log\rho_{BC})}\\
&\leqslant& \Tr{\rho_{AC}\rho_{BC}} = \Tr{\rho^2_C}\leqslant 1.
\end{eqnarray}
This completes the proof.
\end{proof}

Further comparison with the inequalities in \cite{Kim2012,Ruskai2012} is left for the future research.


\subsubsection*{Acknowledgements}
This work is supported by NSFC (No.11301124).



\end{document}